\newtheorem{theorem}{Theorem}
\newtheorem{proposition}{Proposition}
\newtheorem{lemma}{Lemma}
\newcommand{\F}{{\mathbb{F}}}
\newcommand{\Ftwo}{\F_{\!2}}
\newcommand{\FQ}{\F_{\!Q}}
\newcommand{\x}[1]{x^{(#1)}}
\newcommand{\y}[1]{y^{(#1)}}
\newcommand{\z}[1]{z^{(#1)}}
\newcommand{\xmatrix}{{\cal X}}
\newcommand{\ymatrix}{{\cal Y}}
\newcommand{\zmatrix}{{\cal Z}}
\DeclareMathOperator{\rank}{rank}
\newcommand{\matrixset}[2]{\Ftwo^{#1 \times #2}}
\newcommand{\bx}{\mathbf x}
\newcommand{\balpha}{\bm\alpha}
\newcommand{\bbeta}{\bm\beta}
\newcommand{\bgamma}{\bm\gamma}
\renewcommand{\leq}{\leqslant}
\def\inner#1#2{\langle #1 , #2\rangle}
\def\gab#1{\mathrm{Gab}(#1)}
\def\qgab#1{\mathrm{QGab}(#1)}
\def\tr#1{{\mathrm Tr}{(#1)}}
\begin{document}

\title{Correction of circuit faults in a stacked quantum memory using rank-metric codes}

\author{Nicolas Delfosse}
\affiliation{IonQ Inc.}
\author{Gilles Z\'emor}
\affiliation{Institut de math\'ematiques de Bordeaux\\ Institut universitaire de France}

\date{\today}

\begin{abstract}
We introduce a model for a stacked quantum memory made with multi-qubit cells, inspired by multi-level flash cells in classical solid-state drive, and we design quantum error correction codes for this model by generalizing rank-metric codes to the quantum setting.
Rank-metric codes are used to correct faulty links in classical communication networks.
We propose a quantum generalization of Gabidulin codes, which is one of the most popular family of rank-metric codes, and we design a protocol to correct faults in Clifford circuits applied to a stacked quantum memory based on these codes.
We envision potential applications to the optimization of stabilizer states and magic states factories, and to variational quantum algorithms.
Further work is needed to make this protocol practical.
It requires a hardware platform capable of hosting multi-qubit cells with low crosstalk between cells, a fault-tolerant syndrome extraction circuit for rank-metric codes and an associated efficient decoder.
\end{abstract}

\maketitle
\section{Introduction}

To reach large-scale applications a quantum computer must be built around a quantum error correction scheme, responsible for the correction of faults occurring during the computation~\cite{shor1996fault}.

In this work, we consider a model for quantum computation based on a stacked memory with $\ell$ layers of $n$ qubits, represented on \cref{fig:stacked_memory}.
Qubits are grouped in cells containing $\ell$ qubits. 
We use the term cell to emphasize the resemblance with multi-level flash memories, widely adopted for classical storage, which encode multiple bits per cell~\cite{jung1996117, gao2012innovative, cai2017error}.
In the quantum setting, one could consider designing a cell using the energy levels of an atom or a quantum harmonic oscillator.
Qudits with 5, 7, and 13 levels have been demonstrated experimentally with ions~\cite{hrmo2023native, ringbauer2022universal, low2023control}.
High fidelity 3-level and 4-level systems have been realized recently using harmonic oscillators~\cite{brock2024quantum}.
Alternatively, a cell could be an entire module of a quantum computer, connected to other modules through interconnects.

A quantum circuit is executed simultaneously on each layer of the stacked memory, with potentially distinct inputs.
Stacked memories could find applications in fault-tolerant quantum computing, to improve the throughput of the factories used to produce many copies of quantum states consumed during the computation, {\em e.g.} logical zero states, logical plus states or logical Bell states consumed by Steane-style or Knill-style error correction~\cite{steane1997active, knill2005quantum, brun2018efficient} or magic states used to implement logical gates~\cite{bravyi2005universal}.

Below, we design a stacked version of the standard circuit noise model, incorporating the fact that a fault on a qubit is likely to affect other qubits in the same cell.
To correct these faults, we encode together the $\ell n$ qubits of the stacked memory.

\begin{figure}
    \centering
    \includegraphics[width=.9\linewidth]{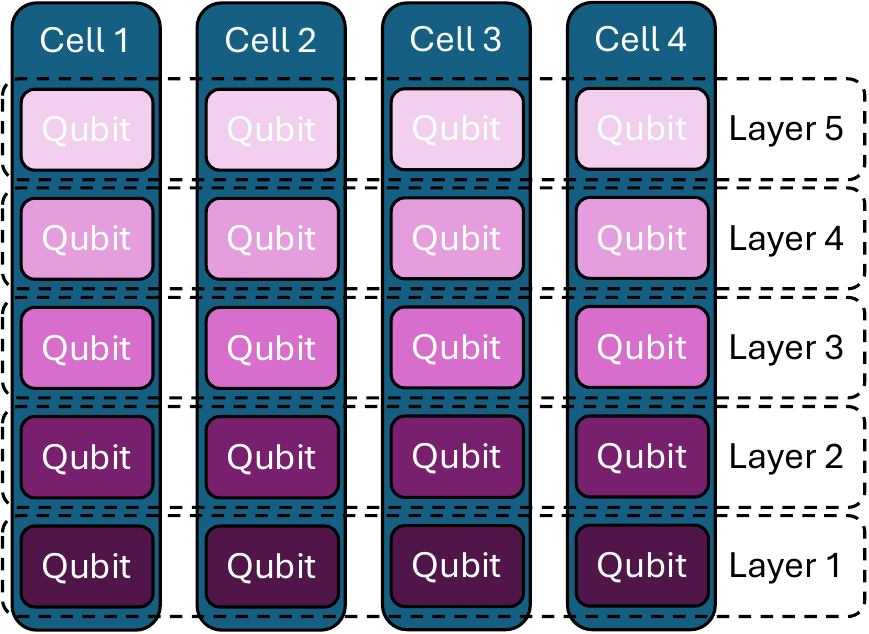}
    \caption{Abstract representation of a $5 \times 4$ stacked memory.}
    \label{fig:stacked_memory}
\end{figure}

The first ingredient of our protocol is \cref{lemma:w_stacked_faults_induce_rank_4w_change} proving that $w$ faults occurring during the stacked implementation of a Clifford circuit result in a Pauli error on the stacked memory which can be interpreted as a matrix with rank at most $4w$.
This key lemma suggests encoding the stacked memory in such a way that low-rank errors can be detected. 
This is precisely what classical rank-metric codes are designed for.
The second ingredient is the introduction of the quantum version of a popular family of classical rank-metric codes, the so-called Gabidulin codes, and the computation of their parameters.

In the remainder of this paper, \cref{sec:toy_model,sec:network_faults} introduce a classical version of our protocol, making the rest of the paper easier to follow, and \cref{sec:gabidulin_codes} reviews classical Gabidulin codes.
\cref{sec:stacked_implementation} introduces the stacked memory model.
Quantum Gabidulin codes are introduced in \cref{sec:quantum_gabidulin_codes} and they are applied to the correction of faults in the stacked implementation.
In conclusion, we discuss the main limitations of our protocol, related experimental results, and its potential applications.

\section{Toy model inspired by classical network coding}
\label{sec:toy_model}

The {\em network coding} paradigm considers information transmission through a network, from an input node to one or several output nodes, where inner nodes transmit linear combinations of their received packets through outgoing edges \cite{ahlswede2000network,li2003linear,
koetter2003algebraic}.
Here we adopt the point of view that the network represents multiplication by a binary matrix.

Formally, a {\em network} is defined to be a directed acyclic graph $G = (V, E)$.
The {\em inputs} (respectively {\em outputs}) vertices of the network are the vertices with no incoming (respectively outgoing) edge.
For simplicity, we assume that the sets of input and output vertices are disjoint.

With each edge $e = (v, w)$ of the network, is associated a $\Ftwo$-linear form $\ell_e: \Ftwo^{\delta} \rightarrow \Ftwo$ where $\delta=1$ if $v$ is an input vertex and $\delta$ is the number of incoming edges of $v$ otherwise.
This linear form represents the information sent through edge $e$.
Vertex $v$ receives $\delta$ bits from its incoming edges, applies $\ell_e$ to this $\delta$-bit vector, and sends the resulting bit to $w$ through edge $e$.
In addition to the linear forms associated with edges, a linear map $\Ftwo^{\delta} \rightarrow \Ftwo$ is associated with each output vertex $v$, where $\delta$ is the number of incoming edges of $v$.
We refer to the transmission over all the outgoing edges of $v$, after evaluating the corresponding linear forms, as the {\em processing of $v$}.

For simplicity, assume that the network has $n$ input vertices and $n$ output vertices.
It represents the map
$x \mapsto y = A x$
applying a $n \times n$ binary matrix $A$ to a vector $x \in \Ftwo^n$.
We treat $x$ and $y$ as column vectors.
The bits of $x$ are fed to the input vertices. The matrix $A$ is applied by executing the processing of all the vertices, sorted in topological order.
The result $y$ is emitted by the output vertices.
The topological order guarantees that each vertex has received all its input bits before its processing is executed.
Recall that any directed acyclic graph admits a topological order.

Our goal is to use the network to apply the matrix $A$ to many input vectors. The network includes faulty edges in unknown locations. If an edge is {\em faulty}, any bit sent through this edge has a probability $p$ to be flipped.

\section{Network faults}
\label{sec:network_faults}

The relevance of rank-metric codes to error protection in network coding schemes was first shown in \cite{silva2008rank}. In our case,
consider the effect of faulty edges:
assume that the network is used to apply $A$ to $m$ input vectors $\x{1}, \dots \x{m} \in \Ftwo^{n}$.
Form the $n \times m$ matrix $\xmatrix$ whose columns are the vectors $\x{i}$ and let $\ymatrix$ be the $n \times m$ matrix whose columns are the $\y{i} = A \x{i}$.
These matrices satisfy $\ymatrix = A \xmatrix$.
Because of faults, we obtain a result $\z{i}$, which is a random variable, instead of $\y{i}$.
Denote by $\zmatrix$ the corresponding matrix.

\begin{lemma}
\label{lemma:w_network_faults_induce_rank_w_change}
    If $t$ edges of the network are faulty, then we have $\rank(\ymatrix - \zmatrix) \leq t$.
\end{lemma}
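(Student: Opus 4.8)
The plan is to track how a single faulty edge perturbs the output, then sum the contributions. Fix a topological order on the vertices and think of the computation as proceeding layer by layer along this order. For a clean run, feeding the matrix $\xmatrix$ produces, on every edge $e$, a row vector $r_e \in \Ftwo^{m}$ recording the bit sent along $e$ for each of the $m$ input vectors; by linearity of the forms $\ell_e$, each $r_e$ is a fixed $\Ftwo$-linear combination of the rows of $\xmatrix$, and the output rows of $\ymatrix$ are likewise linear combinations of the incoming edge-vectors. Now suppose the edges $e_1,\dots,e_t$ are faulty. On input $\xmatrix$ the actual bits carried by $e_j$ differ from the clean bits by some vector $\be_j \in \Ftwo^{m}$ (the pattern of flips across the $m$ runs on that edge).

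The key step is the claim that $\zmatrix - \ymatrix$ lies in the span of $\be_1,\dots,\be_t$, viewed as rows. I would prove this by induction along the topological order, with the inductive statement: for every edge $e$, the discrepancy between the bits actually carried by $e$ and the clean bits $r_e$ is an $\Ftwo$-linear combination of $\be_1,\dots,\be_t$; and similarly for the bits emitted at each output vertex. The base case is an input vertex's outgoing edges, where the carried bits equal a row of $\xmatrix$ exactly (no fault yet unless that very edge is faulty, in which case the discrepancy is one of the $\be_j$). For the inductive step at a vertex $v$ with incoming edges carrying discrepancies in $\mathrm{span}(\be_1,\dots,\be_t)$: applying the linear form $\ell_e$ commutes with taking differences, so the discrepancy on an outgoing edge $e$ is $\ell_e$ applied to the vector of incoming discrepancies, which stays in the span; if in addition $e$ itself is faulty, we add the corresponding $\be_j$, still in the span. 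The same argument applies to the linear map at an output vertex, giving that each row of $\zmatrix - \ymatrix$ is in $\mathrm{span}(\be_1,\dots,\be_t)$.

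Since that span has dimension at most $t$, the row space of $\ymatrix - \zmatrix$ has dimension at most $t$, hence $\rank(\ymatrix - \zmatrix) \leq t$, which is the claim.

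The only mildly delicate point is bookkeeping: being careful that "discrepancy" is defined consistently relative to the \emph{clean} edge-values (not the values that would arise from faults upstream), so that the linearity argument goes through cleanly, and handling the corner case where a faulty edge's tail is itself an output vertex (covered by treating output processing exactly like edge processing). Everything else is a routine induction along the topological order, which by assumption exists for the acyclic graph $G$.
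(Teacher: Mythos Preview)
Your proof is correct and is essentially the paper's argument organized differently: the paper writes $\zmatrix - \ymatrix = \sum_{e\in E_f} \overleftarrow{\varepsilon_e}\,\delta_e$ as an explicit sum of $t$ rank-one matrices (each the outer product of a propagated bit-flip column $\overleftarrow{\varepsilon_e}$ with the flip-pattern row $\delta_e$), while you reach the equivalent conclusion---that the row space of $\zmatrix-\ymatrix$ lies in the span of the $t$ flip-pattern vectors---by induction along the topological order; your $\be_j$ are exactly the paper's $\delta_e$. One small wording fix: your opening line defines $\be_j$ as the \emph{total} discrepancy on $e_j$ relative to the clean run, but the inductive step (correctly) uses it as the flip pattern injected at $e_j$; these differ when upstream edges are also faulty, so state the latter definition from the outset.
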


\begin{proof}
By construction, the matrix $A$ can be decomposed as a product $A = A_s \dots A_2 A_1$ where $A_j$ is a matrix representing the processing of the $j$ th vertex $v_j$ in topological order.
Denote $\overrightarrow{A_{j}} = A_j \dots A_2 A_1$ and $\overleftarrow{A_{j}} = A_s \dots A_{j+2} A_{j+1}$.
The vector $\overrightarrow{A_{j}} \x{i}$
is the result obtained after processing the first $j$ vertices of the network with input $\x{i}$.

If one of the outgoing edges $e$ of $v_j$ is faulty, the bit of $\overrightarrow{A_{j}} \x{i}$  corresponding to $e$ may be flipped, resulting in $\overrightarrow{A_{j}} \x{i} + \varepsilon_e$ where $\varepsilon_e$ is a weight-1 vector.
Denote by $\overleftarrow{\varepsilon_{e}} \in \Ftwo^n$ the column vector $\overleftarrow{A_{j}} \varepsilon_e$.
It represents the error induced by $\varepsilon_e$ at the output of the network. 
Define the vector $\delta_e \in \Ftwo^{m}$ such that  $\delta_{e, i} = 1$ if a fault occurs on edge $e$ with input $\x{i}$ and $\delta_{e, i} = 0$ otherwise.
The output obtained from $\x{i}$ is
$
\z{i} = \y{i} + \overleftarrow{\varepsilon_{e}} \delta_{e, i}.
$
Therefore, the output matrix is 
\begin{align}
\zmatrix = \ymatrix + \overleftarrow{\varepsilon_{e}} \delta_e \cdot
\end{align}
Therein, $\overleftarrow{\varepsilon_{e}}$ is a column vector with length $n$ and $\delta_e$ is a row vector with length $m$.
Their product is a $n \times m$ matrix with rank at most one.
By linearity, faults on other edges have the a similar effect.
Let $E_f \subset E$ be the subset of faulty edges.
Then, the output matrix is of the form 
\begin{align}
\zmatrix = \ymatrix + \sum_{e \in E_f} \overleftarrow{\varepsilon_{e}} \delta_e \cdot
\end{align}
where $\overleftarrow{\varepsilon_{e}}$ and $\delta_e$ are defined like in the previous case.
The matrix $\zmatrix - \ymatrix$ is a sum of $|E_f|$ matrices with rank at most 1, which implies
$\rank(\ymatrix - \zmatrix) \leq |E_f|$, proving the result.
\end{proof}

\section{Correction of network faults with Gabidulin codes}
\label{sec:gabidulin_codes}

\cref{lemma:w_network_faults_induce_rank_w_change} motivates the introduction of rank-metric codes.
Intuitively, if the matrix $\ymatrix$ belongs to a set of matrices well separated from each other in rank-distance we should be able to correct $\zmatrix$ and to recover $\ymatrix$.

Let $n, k$ be two integers with $k \leq n$ and $Q = 2^n$.
Let $\balpha = [\alpha_1,\ldots,\alpha_n]$ be a basis of $\FQ$ viewed as an $\Ftwo$-linear space. 
The associated {\em Gabidulin code} $\gab{\balpha,k}$ is defined as the space of
vectors 
$
[f(\alpha_1),\ldots,f(\alpha_n)]
$ of $\FQ^n$ 
where $f$ ranges over the space of polynomials of the form
\begin{equation}\label{eq:qpoly}
f(X)=a_0X+a_1X^2+\cdots +a_iX^{2^i}+\cdots +a_{k-1}X^{2^{k-1}}
\end{equation}
with $a_i \in \FQ$, which are evaluated at the elements $\alpha_1,\ldots,\alpha_n$ of $\FQ$.
Gabidulin codes were first introduced by Delsarte \cite{delsarte1978} and made popular by Gabidulin \cite{gabidulin1985}.

Codewords can be seen as $n \times n$ binary matrices because any element of $\FQ$ can be regarded as a length-$n$ column vector, using the $\Ftwo$-linear space structure of $\FQ$.
Therefore, we can talk about the rank of a vector of $\FQ^n$ which is its rank as a $n \times n$ binary matrix.
The {\em minimum rank distance} of a code over $\F_Q$ is defined to be the minimum rank of a non-zero codeword.
The following standard result \cite{delsarte1978,gabidulin1985} is proven in \cref{appendix:proof_gab_code_parameters}.

\begin{theorem}
\label{theorem:gabidulin_codes_parameters}
The code $\gab{\balpha,k}$ is a $\FQ$-linear code with length $n$ and dimension $k$ over $\FQ$ and minimum rank distance equal to $n-k+1$.
\end{theorem}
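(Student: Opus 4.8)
The plan is to prove the three claims separately: that $\gab{\balpha,k}$ is $\FQ$-linear, that its dimension over $\FQ$ is $k$, and that its minimum rank distance is exactly $n-k+1$.

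\emph{Linearity and dimension.} The set of linearized ($q$-)polynomials of the shape \eqref{eq:qpoly} is closed under $\FQ$-linear combinations, since scaling $f$ by $\lambda \in \FQ$ and adding two such polynomials keeps all exponents among $2^0, 2^1, \dots, 2^{k-1}$. Evaluation $f \mapsto [f(\alpha_1), \dots, f(\alpha_n)]$ is $\FQ$-linear, so $\gab{\balpha,k}$ is an $\FQ$-linear subspace of $\FQ^n$. For the dimension, I would show the evaluation map restricted to these polynomials is injective: if $f(\alpha_i) = 0$ for all $i$, then since $\balpha$ is an $\Ftwo$-basis of $\FQ$ and $f$ is $\Ftwo$-linear as a map $\FQ \to \FQ$ (this is the defining property of linearized polynomials over $\Ftwo$), $f$ vanishes on all of $\FQ$; but a nonzero polynomial of degree at most $2^{k-1} < Q$ cannot have $Q$ roots, so $f = 0$. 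Hence the coefficient space, which has $\FQ$-dimension $k$ (the free parameters $a_0, \dots, a_{k-1}$), maps isomorphically onto $\gab{\balpha,k}$.

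\emph{Minimum rank distance.} I would first recall the identification: for $c = [f(\alpha_1), \dots, f(\alpha_n)] \in \FQ^n$, its rank as an $n \times n$ binary matrix equals the $\Ftwo$-dimension of $\mathrm{span}_{\Ftwo}\{f(\alpha_1), \dots, f(\alpha_n)\}$; equivalently, $n$ minus the $\Ftwo$-dimension of the space of vectors $(\lambda_1, \dots, \lambda_n) \in \Ftwo^n$ with $\sum_i \lambda_i f(\alpha_i) = 0$. Now $\sum_i \lambda_i f(\alpha_i) = f\bigl(\sum_i \lambda_i \alpha_i\bigr)$ by $\Ftwo$-linearity of $f$, and as $(\lambda_i)$ ranges over $\Ftwo^n$ the element $\beta = \sum_i \lambda_i \alpha_i$ ranges bijectively over $\FQ$ (because $\balpha$ is a basis). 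So the rank of $c$ equals $n - \dim_{\Ftwo} \ker(f|_{\FQ})$. For nonzero $f$ of the form \eqref{eq:qpoly}, $f$ is a nonzero polynomial of degree at most $2^{k-1}$, hence has at most $2^{k-1}$ roots in $\FQ$; since its kernel is an $\Ftwo$-subspace, $\dim_{\Ftwo}\ker(f|_{\FQ}) \leq k-1$, giving $\rank(c) \geq n-k+1$. This establishes the lower bound on the minimum rank distance.

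\emph{The bound is tight.} To show some nonzero codeword achieves rank exactly $n-k+1$, I would exhibit an $f$ whose kernel on $\FQ$ has $\Ftwo$-dimension exactly $k-1$: take any $(k-1)$-dimensional $\Ftwo$-subspace $W \subseteq \FQ$, for instance $W = \mathrm{span}_{\Ftwo}\{\alpha_1, \dots, \alpha_{k-1}\}$, and let $f(X) = \prod_{\beta \in W}(X - \beta)$ (product over $\Ftwo$, so signs are irrelevant). A standard fact about linearized polynomials is that this product is itself a linearized polynomial of degree $2^{k-1}$ whose root set is exactly $W$; thus $f$ has the form \eqref{eq:qpoly} (its lowest-degree term is $X^{2^0}$ after noting $0 \in W$ contributes the factor $X$), it is a nonzero codeword, and $\rank(c) = n - (k-1) = n-k+1$. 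I expect the main obstacle to be justifying cleanly that $\prod_{\beta \in W}(X-\beta)$ is linearized with the claimed degree and coefficient pattern — this is where the argument is least "one-line" and where I would either cite the standard theory of linearized polynomials (e.g.\ the fact that $W \mapsto \prod_{\beta\in W}(X-\beta)$ is a bijection between $\Ftwo$-subspaces of $\FQ$ and monic linearized polynomials whose roots lie in $\FQ$) or prove it by induction on $\dim W$, adjoining one basis vector at a time and composing with a degree-$2$ linearized map. Combining the lower bound with this construction gives minimum rank distance $n-k+1$, completing the proof.
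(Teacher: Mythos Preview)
Your proof is correct and follows the same line as the paper for linearity, dimension, and the lower bound on the minimum rank distance: both identify the rank of a codeword with $n - \dim_{\Ftwo}\ker f$ and bound the kernel size via the degree of $f$. The only divergence is in establishing that the bound is attained. The paper simply invokes the Singleton bound for rank-metric codes (any $\FQ$-linear code of dimension $k$ in $\FQ^n$ has minimum rank distance at most $n-k+1$), which finishes in one line once that bound is known. You instead exhibit an explicit codeword of rank $n-k+1$ via the subspace polynomial $\prod_{\beta\in W}(X-\beta)$; this is more self-contained but, as you correctly flag, requires the standard lemma that such a product is a linearized polynomial of the right shape. Both routes are valid; yours trades a cited inequality for a short constructive argument.
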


\cref{theorem:gabidulin_codes_parameters} provides a strategy to correct network faults during the application of a matrix $A \in \matrixset{n}{n}$.
For simplicity, assume that $A$ is invertible.
We encode together $k$ input vectors using a Gabidulin code with length $n$ and minimum rank distance $n-k+1$.
The input vectors form a matrix $\xmatrix \in \matrixset{n}{k}$, encoded into $\bar \xmatrix \in \matrixset{n}{n}$, that is sent through the network.
The encoded matrix is obtained by adding redundant columns to $\xmatrix$.

In the absence of faults, the output of the network is the matrix $\bar \ymatrix = A\bar \xmatrix$.
Because $A$ is invertible, the set of all possible matrices $A\bar \xmatrix \in \matrixset{n}{n}$ also forms a code with minimum rank distance $n-k+1$, that we refer to as the {\em image code}.
We assume that it is equipped with an efficient decoder. The decoder takes as an input a matrix $\bar\zmatrix$ and it returns a matrix $\bar\ymatrix$ of the image code minimizing $\rank(\bar \ymatrix - \bar \zmatrix)$.

If $t$ faults occurs, the network outputs a matrix $\bar \zmatrix$ satisfying $\rank(\bar \ymatrix - \bar \zmatrix) \leq t$.
Applying the image code decoder, we recover $\bar \ymatrix$ from $\bar \zmatrix$ when $t \leq (n-k+1)/2$.
The matrix $\ymatrix = A\xmatrix$ is obtained by discarding the redundant columns of $\bar \ymatrix$.

This application is somewhat artificial because decoding the image code is generally non-trivial and it could be more difficult than the application of the matrix $A$.
However, this toy model is a steppingstone to the problem of correcting faults in Clifford circuits using quantum Gabidulin codes.

\section{Stacked implementation of a quantum circuit}
\label{sec:stacked_implementation}

Consider a $n$-qubit Clifford circuit $C$ with size $s$, made with unitary single-qubit and two-qubit gates.
Recall that if $P$ is a Pauli operator and $U$ is a Clifford gate, $U P U^{\dagger}$ is also a Pauli operator.

In the remainder of this paper, we introduce a stacked implementation which executes the circuit $C$ multiple times in parallel and corrects circuit faults using a quantum generalization of rank-metric codes introduced below.

A {\em $\ell \times n$ stacked memory}, represented in \cref{fig:stacked_memory}, is a register of $\ell n$ qubits stored in $n$ {\em cells} containing $\ell$ qubits each.
We refer to the $n$ qubits obtained by selecting the $i$ th qubit of each cell as the $i$ th {\em layer} of the stacked memory.

A {\em stacked implementation} with $\ell$ layers of an $n$-qubit circuit $C$ is defined to be the circuit obtained applying $C$ to each layer of a $\ell \times n$ stacked memory.
If the circuit $C$ is made of the gates $U_1, \dots, U_s$, the $\ell$-layer implementation of $C$ is made of the gates $U_1^{\otimes \ell}, \dots, U_s^{\otimes \ell}$.
Any single-qubit gate $U$ of $C$ supported on qubit $i$ is replaced by the gate $U^{\otimes \ell}$ supported on cell $i$.
Similarly any two-qubit gate $U$ acting on qubit $i$ and $j$ becomes the gate $U^{\otimes \ell}$ supported on cell $i$ and $j$.

We consider a stacked version of the standard circuit noise model.
Each gate $U^{\otimes \ell}$ of the stacked implementation is followed by a fault $P$ with probability $p$. The fault $P$ is selected uniformly among the non-trivial Pauli error acting on the cells supporting the gate.

A {\em stacked error} is defined to be a $\ell \times n$ matrix $P$ with coefficients in $\{I, X, Y, Z\}$, where $P_{i,j}$ represent the error affecting the qubit on layer $i$ of cell $j$.
Denote by $P_{i, \cdot}$ the restriction of $P$ to layer $i$.

The {\em rank} of a stacked error $P$, denoted $\rank(P)$, is defined to be the rank of the group generated by the operators $P_{i, \cdot}$ for $i=1,\dots,\ell$.

\begin{lemma}
\label{lemma:w_stacked_faults_induce_rank_4w_change}
If $t$ gates of the stacked implementation are faulty, then the state of the stacked memory at the end of the implementation suffers from a stacked error $Q$ with $\rank(Q) \leq 4t$.
\end{lemma}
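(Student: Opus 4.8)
The plan is to combine two facts: the rank of a product of stacked errors is subadditive, and a single circuit fault contributes rank at most $4$ to the end-of-circuit error even after being propagated through the remaining gates. To make both precise I would pass to the symplectic picture: to a stacked error $P$ associate the binary matrix $\mathcal{M}(P) \in \Ftwo^{\ell \times 2n}$ whose $i$-th row is the symplectic vector of the $n$-qubit Pauli $P_{i,\cdot}$. By the very definition of $\rank(P)$ as the rank of the group generated by the $P_{i,\cdot}$ (once phases are quotiented out, this group is an $\Ftwo$-vector space, namely the row span of $\mathcal{M}(P)$), one has $\rank(P) = \rank(\mathcal{M}(P))$; and the symplectic vector of a product of Paulis is the mod-$2$ sum of their symplectic vectors, so $\mathcal{M}(PP') = \mathcal{M}(P) + \mathcal{M}(P')$.

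Next I would analyse one fault. Write the stacked circuit as $U_s^{\otimes\ell}\cdots U_1^{\otimes\ell}$ and suppose a Pauli fault $F$ occurs right after $U_j^{\otimes\ell}$. Since $U_j$ is a one- or two-qubit gate, $U_j^{\otimes\ell}$ is supported on at most two cells, so $F$ is supported on at most two cells and $\mathcal{M}(F)$ has nonzero entries in at most four columns; hence $\rank(F)\le 4$. Propagating $F$ to the end produces the stacked error $Q_F = V^{\otimes\ell}F(V^{\otimes\ell})^{\dagger}$ with $V = U_s\cdots U_{j+1}$ Clifford. The key observation is that $V^{\otimes\ell}$ acts as $V$ on each layer independently, so conjugation by it transforms every row of $\mathcal{M}(F)$ by one and the same invertible $2n\times 2n$ symplectic matrix $S_V$, i.e. $\mathcal{M}(Q_F) = \mathcal{M}(F)\,S_V$; since $S_V$ is invertible, $\rank(Q_F) = \rank(F) \le 4$.

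Finally I would assemble the bound. A run of the stacked implementation with $t$ faults $F_1,\dots,F_t$ equals the ideal stacked circuit followed by the product $Q = Q_{F_1}\cdots Q_{F_t}$ of the propagated faults (their order only affecting a global phase), so $\mathcal{M}(Q) = \sum_{r=1}^{t}\mathcal{M}(Q_{F_r})$ and subadditivity of the rank gives $\rank(Q) = \rank(\mathcal{M}(Q)) \le \sum_{r=1}^{t}\rank(Q_{F_r}) \le 4t$. The step I expect to need the most care is the key observation in the middle paragraph: spelling out that applying the same Clifford $V$ to each of the $\ell$ layers acts on $\mathcal{M}(F)$ as right-multiplication by a single fixed invertible matrix — equivalently, that conjugation by $V^{\otimes\ell}$ is a layer-wise group automorphism — since this is exactly what prevents the rank from growing during propagation. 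The remaining ingredients, that a two-cell-supported Pauli has symplectic rank at most $4$ and that rank is subadditive under products of Paulis, are routine.
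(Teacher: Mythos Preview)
Your proposal is correct and follows essentially the same approach as the paper: push each fault to the end by conjugation through the remaining stacked gates, observe that a fault supported on at most two cells has rank at most $4$ and that layer-wise conjugation by a Clifford preserves this rank, then use subadditivity of rank under Pauli products. The only difference is that you make the symplectic $\Ftwo^{\ell\times 2n}$ picture explicit (so that rank preservation becomes right-multiplication by an invertible matrix and subadditivity becomes ordinary matrix-rank subadditivity), whereas the paper phrases these same steps more tersely as ``conjugation is an automorphism of the Pauli group'' and $\rank\!\left(\prod Q^{(t)}\right)\le\sum\rank(Q^{(t)})$.
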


\begin{proof}
The stacked memory undergoes the operation
\begin{align}
\label{eq:stacked_implementation_with_faults}
P^{(s)} U_s^{\otimes \ell} \dots  P^{(2)} U_2^{\otimes \ell}  P^{(1)} U_1^{\otimes \ell}
\end{align}
where $P^{(i)}$ is the fault following the gate $U_i^{\otimes \ell}$.
To obtain the effect of faults on the output of the circuit, we use the relation
\begin{align}
\label{eq:conjugating_stacked_faults}
U^{\otimes \ell} P
= P' U^{\otimes \ell}
\end{align}
where $P$ is a Pauli error occurring before the gate $U^{\otimes \ell}$ and $P'$ is obtained by conjugating each row of $P$ by $U$.
Applying \cref{eq:conjugating_stacked_faults}, we can move all the Pauli errors in \cref{eq:stacked_implementation_with_faults} to the left, which yields
\begin{align}
\label{eq:stacked_implementation_with_faults_propagated}
\lambda Q^{(s)} \dots Q^{(2)} Q^{(1)} U_s^{\otimes \ell} \dots U_2^{\otimes \ell} U_1^{\otimes \ell} \cdot
\end{align}
Therein,
$\lambda$ is a global phase that can be ignored and $Q^{(t)}$ is obtained by conjugating the rows of $P^{(t)}$
by $U_{s} \dots U_{t+1}$.
The product of the $Q^{(t)}$ is taken component by component.

Because we consider circuits $C$ containing only single-qubit gates and two-qubit gates, the stacked error $P^{(t)}$ following such a gate has rank at most $4$.
The conjugation being an automorphism of the Pauli group, it preserves the rank of $P^{(t)}$, meaning that $\rank(Q^{(t)}) = \rank(P^{(t)})$ for all $t$.
Moreover, we have 
\begin{align}
    \rank\left(\prod_{t=1}^s Q^{(t)}\right) \leq \sum_{t=1}^s \rank(Q^{(t)})
\end{align}
which is at most four times the number of non-trivial faults $P^{(t)}$.
\end{proof}

\section{Quantum Gabidulin codes}
\label{sec:quantum_gabidulin_codes}

\cref{lemma:w_stacked_faults_induce_rank_4w_change} shows that a small number of faults induce a low-rank error on the output state of the stacked memory. Following our toy model, we introduce a quantum generalization of Gabidulin codes to correct these low-rank errors.

The field $\FQ$ with $Q = 2^n$ is a $n$-dimensional space over $\Ftwo$ equipped with the $\Ftwo$-valued trace 
$
\tr{\alpha} := \alpha + \alpha^2 + \dots + \alpha^{2^{n-1}}.
$
If $n$ is odd, it admits a {\em trace-orthogonal normal} basis (\cite{macwilliams1977theory} Ch. 4 \S 9), that is a basis of the form 
$
\balpha = [\alpha,\alpha^2,\ldots,\alpha^{2^{n-1}}]
$ for some $\alpha \in \FQ$ with 
\begin{align}
\label{eq:trace_orthonomality_for_powers_of_alpha}
\tr{\alpha^{2^i} \alpha^{2^j}} = \delta_{i, j}
\end{align}
for all $i,j = 1,\dots, n$.
In the remainder of this paper, $n$ is odd and $\balpha$ is a trace-orthogonal normal basis.

The $\FQ$-linear space $\FQ^n$ is equipped with the inner product $\inner{\bbeta}{\bgamma} := \sum_{i=1}^n \tr{\beta_i \gamma_i}$
where 
$\bbeta = [\beta_1, \dots, \beta_n]$ and 
$\bgamma = [\gamma_1, \dots, \gamma_n]$ are in $\FQ^n$.

Consider the vectors 
\begin{center}
\begin{tabular}{cccccccc}
$\balpha $
& $= [$
& $\alpha,$
& $\alpha^2,$
& $\dots$
& $\alpha^{2^{n-2}},$
& $\alpha^{2^{n-1}}$
& ],
\\
$\balpha^2 $
& $= [$
& $\alpha^2,$
& $\alpha^4,$
& $\dots$
& $\alpha^{2^{n-1}},$
& $\alpha$
& ],
\\
&
$\vdots$
&&&&&&
\\
$\balpha^{2^n-1} $
& $= [$
& $\alpha^{2^{n-1}},$
& $\alpha,$
& $\dots$
& $\alpha^{2^{n-3}},$
& $\alpha^{2^{n-2}}$
& ].
\\
\end{tabular}
\end{center}

\begin{proposition}
\label{prop:gabidulin_code_dual}
The dual code of $\gab{\balpha, r}$ is $\gab{\balpha^{2^r}, n-r}$.
\end{proposition}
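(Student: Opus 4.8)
The plan is to prove the inclusion $\gab{\balpha^{2^r},n-r}\subseteq\gab{\balpha,r}^{\perp}$ by a direct computation with the form $\inner{\cdot}{\cdot}$, and then to upgrade it to an equality by comparing dimensions via \cref{theorem:gabidulin_codes_parameters}. I would first fix notation: write $\beta_s:=\alpha^{2^s}$ for $s\in\Z/n\Z$, so $\beta_{s+n}=\beta_s$, the given basis is $\balpha=[\beta_0,\ldots,\beta_{n-1}]$, relation \cref{eq:trace_orthonomality_for_powers_of_alpha} becomes $\tr{\beta_s\beta_t}=\delta_{s,t}$ with indices modulo $n$, and $\balpha^{2^r}=[\beta_r,\beta_{r+1},\ldots,\beta_{r+n-1}]$ is the same normal basis cyclically shifted (in particular still an $\Ftwo$-basis of $\FQ$, so $\gab{\balpha^{2^r},n-r}$ is well defined). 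Since $\beta_s^{2^i}=\beta_{s+i}$, a generic codeword of $\gab{\balpha,r}$ is the evaluation vector of $f(X)=\sum_{i=0}^{r-1}a_iX^{2^i}$, with $s$-th coordinate $f(\beta_s)=\sum_{i=0}^{r-1}a_i\beta_{s+i}$, and a generic codeword of $\gab{\balpha^{2^r},n-r}$ is the evaluation of $g(X)=\sum_{k=0}^{n-r-1}b_kX^{2^k}$ at $\balpha^{2^r}$, with $s$-th coordinate $g(\beta_{r+s})=\sum_{k=0}^{n-r-1}b_k\beta_{r+s+k}$, for $s=0,\ldots,n-1$.

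Next I would compute the inner product of two such codewords. Expanding coordinatewise and using the $\Ftwo$-linearity of the trace to move the sum over $s$ inside gives
\begin{align}
&\sum_{s=0}^{n-1}\tr{f(\beta_s)\,g(\beta_{r+s})} \nonumber \\
&\qquad =\sum_{i=0}^{r-1}\sum_{k=0}^{n-r-1}\tr{a_ib_k\sum_{s=0}^{n-1}\beta_{s+i}\beta_{r+s+k}}.\nonumber
\end{align}
For fixed $i,k$, reindexing the inner sum by $t=s+i$ and setting $d=r+k-i$ turns it into $\sum_{t\in\Z/n\Z}\beta_t\beta_{t+d}$; and since $\beta_t\beta_{t+d}=\alpha^{2^t(1+2^d)}=\bigl(\alpha^{1+2^d}\bigr)^{2^t}$, this equals $\tr{\alpha^{1+2^d}}=\tr{\beta_0\beta_d}$, which by trace-orthonormality is $1$ if $d\equiv0\pmod n$ and $0$ otherwise. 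As $i$ runs over $\{0,\ldots,r-1\}$ and $k$ over $\{0,\ldots,n-r-1\}$, the integer $d=r+k-i$ lies in $\{1,\ldots,n-1\}$, hence is never a multiple of $n$, so every term of the double sum vanishes and the inner product is $0$. Thus every codeword of $\gab{\balpha^{2^r},n-r}$ is orthogonal to every codeword of $\gab{\balpha,r}$ for $\inner{\cdot}{\cdot}$.

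Finally I would close the argument by dimensions. By \cref{theorem:gabidulin_codes_parameters}, $\gab{\balpha,r}$ and $\gab{\balpha^{2^r},n-r}$ have respective dimensions $r$ and $n-r$ over $\FQ$, i.e. $\Ftwo$-dimensions $rn$ and $(n-r)n$. Viewing $\FQ^n$ as an $n^2$-dimensional $\Ftwo$-space on which $\inner{\cdot}{\cdot}$ is a nondegenerate bilinear form, the orthogonal complement $\gab{\balpha,r}^{\perp}$ has $\Ftwo$-dimension $n^2-rn=(n-r)n$; combined with the inclusion just proved, this forces $\gab{\balpha^{2^r},n-r}=\gab{\balpha,r}^{\perp}$. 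The only step that requires care is the identity $\sum_{t}\beta_t\beta_{t+d}=\tr{\beta_0\beta_d}$ and its evaluation via trace-orthonormality, together with tracking the cyclically shifted evaluation points $\beta_{r+s}$ of $\gab{\balpha^{2^r},n-r}$; everything else is routine bookkeeping. Equivalently, one can first invoke the standard fact that for an $\FQ$-linear code the dual with respect to $\inner{\cdot}{\cdot}$ coincides with its ordinary $\FQ$-bilinear dual, reducing the claim to checking that a generator matrix of $\gab{\balpha,r}$ is $\FQ$-orthogonal to a generator matrix of $\gab{\balpha^{2^r},n-r}$ — the very same computation.
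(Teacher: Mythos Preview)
Your proof is correct. The paper's argument takes a more structural shortcut: it observes that $\gab{\balpha,r}$ is the $\FQ$-span of the vectors $\balpha^{2^i}$ for $i<r$, that the full family $\balpha,\balpha^2,\ldots,\balpha^{2^{n-1}}$ is an orthonormal basis of $\FQ^n$ (asserted as a direct consequence of trace-orthonormality), and hence that the dual is the $\FQ$-span of $\balpha^{2^r},\ldots,\balpha^{2^{n-1}}$, which is precisely $\gab{\balpha^{2^r},n-r}$. Your approach unpacks the same content --- your identity $\sum_t\beta_t\beta_{t+d}=\tr{\beta_0\beta_d}=\delta_{0,d}$ is exactly the orthonormality of those basis vectors under the standard $\FQ$-bilinear form --- but you apply it to generic codewords with arbitrary $\FQ$-coefficients $a_i,b_k$ rather than to the basis vectors alone, and then close with a dimension count instead of reading off the complementary span. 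The paper's version is terser and more conceptual; yours is more self-contained, since it verifies explicitly that orthogonality survives $\FQ$-linear combinations (a point the paper leaves implicit in its use of an ``orthonormal basis'' for an $\Ftwo$-valued form) and does not require first identifying a clean $\FQ$-basis of the code.
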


\begin{proof}
By definition, $\gab{\balpha, r}$ is generated by the vectors $\balpha^{2^{i}}$ with $i < r$ and the
$
\balpha,\balpha^2,\ldots,\balpha^{2^{n-1}}
$
form an orthonormal basis of $\FQ^n$.
Therefore, its dual is generated by the vectors $\balpha^{2^{r}}, \balpha^{2^{r+1}}, \dots, \balpha^{2^{n-1}}$. It is the code $\gab{\balpha^{2^r}, n-r}$.
\end{proof}

Any $\bbeta \in \FQ^n$ can be interpreted as a $n \times n$ matrix by replacing each $\beta_i \in \FQ$ by the column vector obtained by expressing $\beta_i$ in the basis $\balpha$.
Denote by $X(\beta)$ the $X$ type Pauli operator acting on a $n \times n$ stacked memory whose support is given by the matrix representation of $\bbeta$.
The $Z$ type Pauli error $Z(\bbeta)$ is defined similarly.

\begin{lemma}
\label{lemma:commutation_from_orthogonality}
Let $\bbeta, \bgamma$ in $\FQ^n$. Then $X(\bbeta)$ and $Z(\bgamma)$ commute if and only if $\inner{\bbeta}{\bgamma} = 0$.
\end{lemma}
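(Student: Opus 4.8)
The plan is to reduce the commutation of the two Pauli operators $X(\bbeta)$ and $Z(\bgamma)$ to a parity count that we can express in terms of the trace form. First I would recall the standard fact about Pauli operators on $\ell n$ qubits: if $X(M)$ denotes the $X$-type Pauli supported on the matrix $M \in \matrixset{\ell}{n}$ (that is, the tensor product of $\sigma_x$ over the positions where $M$ has a $1$) and $Z(N)$ the analogous $Z$-type Pauli, then
\begin{align}
X(M) Z(N) = (-1)^{\langle M, N\rangle_2} Z(N) X(M),
\end{align}
where $\langle M, N\rangle_2 = \sum_{i,j} M_{i,j} N_{i,j} \bmod 2$ is the entrywise inner product of the two binary matrices. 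So $X(\bbeta)$ and $Z(\bgamma)$ commute if and only if the total number of positions where the matrix representations of $\bbeta$ and $\bgamma$ both have a $1$ is even.

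Next I would translate this entrywise count into field language. Writing $\bbeta = [\beta_1,\dots,\beta_n]$ and $\bgamma = [\gamma_1,\dots,\gamma_n]$, the $i$-th column of the matrix representation of $\bbeta$ is the coordinate vector of $\beta_i$ in the basis $\balpha$, and similarly for $\bgamma$. Hence the entrywise inner product of the two matrices is $\sum_{i=1}^{n}$ of the $\Ftwo$-dot product of the coordinate vectors of $\beta_i$ and $\gamma_i$. The key observation is that for a trace-orthonormal basis $\balpha$, the $\Ftwo$-dot product of the coordinate vectors of two field elements $\mu,\nu \in \FQ$ equals $\tr{\mu \nu}$: indeed, writing $\mu = \sum_a m_a \alpha^{2^a}$ and $\nu = \sum_b n_b \alpha^{2^b}$ with $m_a, n_b \in \Ftwo$, bilinearity of the trace together with \cref{eq:trace_orthonomality_for_powers_of_alpha} gives $\tr{\mu\nu} = \sum_{a,b} m_a n_b \,\tr{\alpha^{2^a}\alpha^{2^b}} = \sum_{a,b} m_a n_b \delta_{a,b} = \sum_a m_a n_a$, which is exactly the dot product of the coordinate vectors, computed in $\Ftwo$. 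Summing over $i$ gives that the entrywise matrix inner product equals $\sum_{i=1}^n \tr{\beta_i\gamma_i} = \inner{\bbeta}{\bgamma}$, where the right-hand side is understood as the element of $\Ftwo$ obtained from the $\Ftwo$-valued trace.

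Combining the two steps, $X(\bbeta)$ and $Z(\bgamma)$ commute if and only if $\inner{\bbeta}{\bgamma} = 0$ in $\Ftwo$, which is the claim. The only genuinely non-routine point is the middle step — that trace-orthonormality of $\balpha$ makes the combinatorial entrywise count coincide with the trace inner product — and this is where I would be careful to check that the trace values $\tr{\alpha^{2^a}\alpha^{2^b}}$ really are $\delta_{a,b}$ for the full index range, and that no characteristic-$2$ subtlety (e.g. cross terms $2 m_a n_a$) is being swept under the rug; since we are in characteristic $2$ the bilinear expansion has no binomial coefficient issues, so this goes through cleanly. Everything else is the standard Pauli commutation bookkeeping.
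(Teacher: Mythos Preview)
Your proof is correct and follows essentially the same approach as the paper: expand each $\beta_j,\gamma_j$ in the trace-orthonormal basis $\balpha$, use \cref{eq:trace_orthonomality_for_powers_of_alpha} to collapse the cross terms so that $\inner{\bbeta}{\bgamma}=\sum_{i,j} b_{i,j}c_{i,j}$, and identify this with the standard symplectic/entrywise condition for $X$--$Z$ Pauli commutation. The only difference is cosmetic: you start from the Pauli commutation side and work toward the trace expression, whereas the paper computes $\inner{\bbeta}{\bgamma}$ first and then remarks that the resulting sum is the commutation condition.
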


\begin{proof}
In the basis $\balpha$, we have 
$\beta_j = \sum_{i=1}^n b_{i, j} \alpha^{2^i}$ 
and 
$\gamma_j = \sum_{i'=1}^n c_{i', j} \alpha^{2^{i'}}$
with $b_{i, j}, c_{i', j}$ in $\Ftwo$.
Their inner product is
\begin{align*}
\inner{\bbeta}{\bgamma}
= \sum_{j=1}^n \sum_{i=1}^n \sum_{i'=1}^n b_{i,j} c_{i',j} \tr{\alpha^{2^i}\alpha^{2^{i'}}}
= \sum_{j=1}^n \sum_{i=1}^n b_{i,j} c_{i,j}
\end{align*}
where the second equality is derived from \cref{eq:trace_orthonomality_for_powers_of_alpha}.
The lemma follows from this expression.
\end{proof}

Let $r, s$ be integers such that $r+s < n$. The {\em quantum Gabidulin code} $\qgab{\balpha, r, s}$ is the stabilizer code defined by the stabilizer generators 
$
X(\bbeta)
$
with $\bbeta \in \gab{\balpha, r}$
and 
$
Z(\bgamma)
$
with $\bgamma \in \gab{\balpha^r, s}$.
These operators commute based on \cref{lemma:commutation_from_orthogonality} and 
\cref{prop:gabidulin_code_dual}, because $\gab{\balpha^r, s}$ is included in the dual $\gab{\balpha^r, n-r}$ of $\gab{\balpha^r, r}$.

Define the {\em minimum rank distance} of a quantum Gabidulin code to be the minimum rank of a $n \times n$ Pauli error (for the notion of rank defined in \cref{sec:stacked_implementation}) which commutes with all the stabilizer and which is not a stabilizer (up to a global phase).
Recall that the stabilizers are the products of stabilizer generators.

\begin{theorem}
\label{theorem:quantum_gabidulin_codes_parameters}
For $r<n/2$, the quantum code $\qgab{\alpha,r,r}$ encodes $k=n(n-2r)$ logical qubits into $n^2$ physical qubits and its minimum rank distance at least $r+1$.
\end{theorem}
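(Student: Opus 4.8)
The plan is to compute the number of logical qubits from the stabilizer count, and then to lower-bound the minimum rank distance by reducing it to the classical minimum rank distance of an auxiliary Gabidulin code, using the CSS structure and the trace-orthonormality of the basis $\balpha$.

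First I would count the logical qubits. The code $\qgab{\balpha,r,r}$ is a CSS code built from the pair of classical codes $\gab{\balpha,r}$ (for $X$-stabilizers) and $\gab{\balpha^{2^r},r}$ (for $Z$-stabilizers), each of which is an $\FQ$-linear code of dimension $r$ over $\FQ$ by \cref{theorem:gabidulin_codes_parameters}, hence of dimension $rn$ over $\Ftwo$. A Pauli $X$-operator on the $n \times n = n^2$ physical qubits is specified by a binary $n \times n$ matrix, i.e.\ by an element of $\FQ^n$; the $X$-stabilizer group has $\Ftwo$-dimension $rn$, and likewise the $Z$-stabilizer group. Since $\gab{\balpha^{2^r},r} \subseteq \gab{\balpha^{2^r},n-r}$, which is the dual of $\gab{\balpha,r}$ by \cref{prop:gabidulin_code_dual}, the two stabilizer groups commute and are independent, so the number of independent stabilizer generators is $2rn$. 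Hence the number of logical qubits is $n^2 - 2rn = n(n-2r)$, giving $k = n(n-2r)$.

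Next, the distance. A logical Pauli operator is a Pauli $E = X(\bbeta)Z(\bgamma)$ that commutes with every stabilizer but is not itself a stabilizer (up to phase). By \cref{lemma:commutation_from_orthogonality} and \cref{prop:gabidulin_code_dual}, commutation with all $Z$-stabilizers $Z(\bgamma')$, $\bgamma' \in \gab{\balpha^{2^r},r}$, forces $\bbeta$ into the dual of $\gab{\balpha^{2^r},r}$, which is $\gab{\balpha,n-r}$; symmetrically $\bgamma \in \gab{\balpha^{2^r},n-r}$. Not being a stabilizer means $\bbeta \notin \gab{\balpha,r}$ or $\bgamma \notin \gab{\balpha^{2^r},r}$. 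I would now argue that $\rank(E) \geq \max(\rank X(\bbeta), \rank Z(\bgamma))$ for the matrix-rank notion on the $\ell\times n$ layers — since the rows generated by $E$ project onto the rows generated by $X(\bbeta)$ (and by $Z(\bgamma)$) under the quotient that kills the $Z$-part (resp.\ $X$-part), rank cannot increase under such a group homomorphism. So it suffices to show that a nonzero element $\bbeta \in \gab{\balpha,n-r}\setminus\gab{\balpha,r}$ has matrix-rank at least $r+1$, and similarly on the $Z$ side.

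This last point is the main obstacle, and it is where the classical theory re-enters. The key observation is that $\gab{\balpha,n-r}\setminus\gab{\balpha,r}$, as a \emph{set}, is closed under multiplication by the Frobenius (cyclic shift of the exponents of $\alpha$), so I may assume the lowest-degree monomial present is $a_0 X$ with $a_0 \neq 0$; equivalently $\bbeta$ lies in $\gab{\balpha, n-r}$ but its associated $q$-polynomial has nonzero constant-degree coefficient and degree at most $2^{n-r-1}$. A nonzero $q$-polynomial $f$ of $q$-degree at most $n-r-1$ has kernel of $\Ftwo$-dimension at most $n-r-1$ on $\FQ$ (a linearized polynomial of that degree has at most $2^{\,n-r-1}$ roots), so the $\Ftwo$-linear map $\alpha \mapsto f(\alpha)$ has image of dimension at least $r+1$; since the matrix representation of $\bbeta$ has that image as its column space (in the basis $\balpha$ the $j$-th column is $f(\alpha^{2^j})$ expanded in $\balpha$, and these span $\mathrm{Im}\,f$), its rank is at least $r+1$. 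The one genuinely delicate step is the reduction "I may assume $a_0 \neq 0$'': if every codeword in $\gab{\balpha,n-r}\setminus\gab{\balpha,r}$ that we care about had $a_0=0$, we could divide out the Frobenius and shift degrees down, but one must check this Frobenius action genuinely preserves membership in $\gab{\balpha,n-r}$ and non-membership in $\gab{\balpha,r}$ — this uses that both are spanned by consecutive $\balpha^{2^i}$ and that $\balpha^{2^{i+n}} = \balpha^{2^i}$, so multiplication by the Frobenius permutes the generating set of each cyclically, hence preserves each code as a set. Granting that, the bound $r+1$ follows, and the same argument applied to $\gab{\balpha^{2^r},n-r}$ handles the $Z$-side, completing the proof.
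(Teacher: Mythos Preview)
Your overall strategy matches the paper's: count stabilizers for $k$, then for the distance use the CSS structure to reduce to bounding the rank of the $X$-part or $Z$-part separately, each of which lands in a Gabidulin code of dimension $n-r$. Your projection argument for $\rank(E)\geq\max\bigl(\rank X(\bbeta),\rank Z(\bgamma)\bigr)$ is in fact more explicit than the paper, which simply asserts that ``the minimum rank is reached either for an $X$ error or for a $Z$ error.''

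Two issues, though. First, a slip in the dual computation: by \cref{prop:gabidulin_code_dual} the dual of $\gab{\balpha^{2^r},r}$ is $\gab{(\balpha^{2^r})^{2^r},\,n-r}=\gab{\balpha^{2^{2r}},n-r}$, not $\gab{\balpha,n-r}$. This is harmless for the final bound (both are Gabidulin codes of dimension $n-r$, and the paper uses the correct one), but your subsequent discussion of $\gab{\balpha,n-r}\setminus\gab{\balpha,r}$ is literally about the wrong code.

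Second, and more importantly, the entire Frobenius detour is both unnecessary and incorrectly justified. You claim the Frobenius ``permutes the generating set of each cyclically, hence preserves each code as a set,'' but the Frobenius cyclically permutes the \emph{full} set $\{\balpha,\balpha^2,\ldots,\balpha^{2^{n-1}}\}$; it does not stabilize a proper consecutive subset such as $\{\balpha,\ldots,\balpha^{2^{n-r-1}}\}$, so neither $\gab{\balpha,n-r}$ nor $\gab{\balpha,r}$ is preserved as a set (and non-membership in $\gab{\balpha,r}$ is certainly not preserved: $\balpha^{2^r}\notin\gab{\balpha,r}$ shifts down to $\balpha^{2^{r-1}}\in\gab{\balpha,r}$). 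None of this is needed anyway: an element of the relevant coset is in particular a \emph{nonzero} element of a Gabidulin code of dimension $n-r$, so \cref{theorem:gabidulin_codes_parameters} immediately gives rank at least $n-(n-r)+1=r+1$. That one-line citation is exactly what the paper does. Your closing kernel/image computation is just a reproof of \cref{theorem:gabidulin_codes_parameters}, and it already goes through for any nonzero $q$-polynomial of $q$-degree at most $n-r-1$ without first arranging $a_0\neq 0$.
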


\begin{proof}
The number of independent $X$ stabilizer generators is the dimension of $\gab{\balpha, r}$ over $\Ftwo$, that is $nr$.
The same argument for $Z$ stabilizer generators leads to $k = n^2 - 2nr$.

The minimum rank is reached either for an $X$ error or for a $Z$ error.
Any $Z$ error commuting with all the stabilizers corresponds to a vector of $\gab{\balpha, r}^\perp$, which is $\gab{\balpha^{2^r}, n-r}$ by \cref{prop:gabidulin_code_dual}. If it is non-trivial its minimum rank is $r+1$ by \cref{theorem:gabidulin_codes_parameters}.
Similarly, a non-trivial $X$ error commuting with all the stabilizers corresponds to a vector of 
$
\gab{\balpha^{2^r}, r}^\perp = 
\gab{\balpha^{2^{2r}}, n-r}
$ 
and its rank is also lower bounded by $r+1$.
\end{proof}

To correct faults in the stacked implementation of a $n$-qubit Clifford circuit with $n$ layers, we encode the input state of the stacked memory using a quantum Gabidulin code $Q = \qgab{\balpha, r}$.
The output state of the stacked memory is then encoded in a different stabilizer code $Q'$ because the circuit $C$ was applied to each layer of the stacked memory. 
However, this transformation preserves the minimum rank distance $d = r+1$.
To correct an error $E$ on the output state of the stacked memory, we measure the stabilizer of $Q'$.
The outcome is the so called {\em syndrome}.
Then, we execute a {\em decoder} which uses the syndrome to determine a minimum rank Pauli correction $\hat E$ to apply to the stacked memory. 
If the syndrome can be measured accurately and if we have an efficient decoder for $Q'$, this approach corrects any combination of up to $r/8$ faults.

\section{Conclusion}

We introduced a quantum generalization of Gabidulin codes and we proposed a protocol for the correction of quantum circuit faults based on these codes.
This protocol may find applications to improve stabilizer state factories or magic state factories that are made with Clifford circuits (where only the input state is non-Clifford)~\cite{bravyi2005universal}.
It could play a role in preparing the stabilizer states consumed by CliNR~\cite{delfosse2024low}.
It may also be relevant in the Clifford part of variational quantum algorithms to probe multiple initial states in parallel~\cite{cerezo2021variational, peruzzo2014variational, anand2023hamiltonians}.

Further work is needed to make our approach practical.
First, we need a platform hosting multi-qubit cells with little crosstalk between cells.
Second, it is not realistic to assume that the syndrome can be measured exactly. A fault-tolerant syndrome extraction circuit is needed in practice. One could consider generalizing classical rank-metric LDPC codes~\cite{aragon2019low} to the quantum setting to make the syndrome extraction less noisy.
Third, a fast and efficient decoder must be designed for the output rank-metric code. LDPC codes could also make this task easier.
Fourth, one need to design magic state factories for stacked memories, in order to make this scheme universal for quantum computing.

Finally, we leave the question of building broader families of quantum rank-metric codes, providing more flexibility in the parameter choice, for future work.

\bibliography{references}

\begin{thebibliography}{24}%
\makeatletter
\providecommand \@ifxundefined [1]{%
 \@ifx{#1\undefined}
}%
\providecommand \@ifnum [1]{%
 \ifnum #1\expandafter \@firstoftwo
 \else \expandafter \@secondoftwo
 \fi
}%
\providecommand \@ifx [1]{%
 \ifx #1\expandafter \@firstoftwo
 \else \expandafter \@secondoftwo
 \fi
}%
\providecommand \natexlab [1]{#1}%
\providecommand \enquote  [1]{``#1''}%
\providecommand \bibnamefont  [1]{#1}%
\providecommand \bibfnamefont [1]{#1}%
\providecommand \citenamefont [1]{#1}%
\providecommand \href@noop [0]{\@secondoftwo}%
\providecommand \href [0]{\begingroup \@sanitize@url \@href}%
\providecommand \@href[1]{\@@startlink{#1}\@@href}%
\providecommand \@@href[1]{\endgroup#1\@@endlink}%
\providecommand \@sanitize@url [0]{\catcode `\\12\catcode `\$12\catcode `\&12\catcode `\#12\catcode `\^12\catcode `\_12\catcode `\%12\relax}%
\providecommand \@@startlink[1]{}%
\providecommand \@@endlink[0]{}%
\providecommand \url  [0]{\begingroup\@sanitize@url \@url }%
\providecommand \@url [1]{\endgroup\@href {#1}{\urlprefix }}%
\providecommand \urlprefix  [0]{URL }%
\providecommand \Eprint [0]{\href }%
\providecommand \doibase [0]{https://doi.org/}%
\providecommand \selectlanguage [0]{\@gobble}%
\providecommand \bibinfo  [0]{\@secondoftwo}%
\providecommand \bibfield  [0]{\@secondoftwo}%
\providecommand \translation [1]{[#1]}%
\providecommand \BibitemOpen [0]{}%
\providecommand \bibitemStop [0]{}%
\providecommand \bibitemNoStop [0]{.\EOS\space}%
\providecommand \EOS [0]{\spacefactor3000\relax}%
\providecommand \BibitemShut  [1]{\csname bibitem#1\endcsname}%
\let\auto@bib@innerbib\@empty
\bibitem [{\citenamefont {Shor}(1996)}]{shor1996fault}%
  \BibitemOpen
  \bibfield  {author} {\bibinfo {author} {\bibfnamefont {P.~W.}\ \bibnamefont {Shor}},\ }\bibfield  {title} {\bibinfo {title} {Fault-tolerant quantum computation},\ }in\ \href@noop {} {\emph {\bibinfo {booktitle} {Proceedings of 37th conference on foundations of computer science}}}\ (\bibinfo {organization} {IEEE},\ \bibinfo {year} {1996})\ pp.\ \bibinfo {pages} {56--65}\BibitemShut {NoStop}%
\bibitem [{\citenamefont {Jung}\ \emph {et~al.}(1996)\citenamefont {Jung}, \citenamefont {Choi}, \citenamefont {Suh}, \citenamefont {Suh}, \citenamefont {Kim}, \citenamefont {Lim}, \citenamefont {Koh}, \citenamefont {Park}, \citenamefont {Lee}, \citenamefont {Park} \emph {et~al.}}]{jung1996117}%
  \BibitemOpen
  \bibfield  {author} {\bibinfo {author} {\bibfnamefont {T.-S.}\ \bibnamefont {Jung}}, \bibinfo {author} {\bibfnamefont {Y.-J.}\ \bibnamefont {Choi}}, \bibinfo {author} {\bibfnamefont {K.-D.}\ \bibnamefont {Suh}}, \bibinfo {author} {\bibfnamefont {B.-H.}\ \bibnamefont {Suh}}, \bibinfo {author} {\bibfnamefont {J.-K.}\ \bibnamefont {Kim}}, \bibinfo {author} {\bibfnamefont {Y.-H.}\ \bibnamefont {Lim}}, \bibinfo {author} {\bibfnamefont {Y.-N.}\ \bibnamefont {Koh}}, \bibinfo {author} {\bibfnamefont {J.-W.}\ \bibnamefont {Park}}, \bibinfo {author} {\bibfnamefont {K.-J.}\ \bibnamefont {Lee}}, \bibinfo {author} {\bibfnamefont {J.-H.}\ \bibnamefont {Park}}, \emph {et~al.},\ }\bibfield  {title} {\bibinfo {title} {A 117-mm/sup 2/3.3-v only 128-mb multilevel nand flash memory for mass storage applications},\ }\href@noop {} {\bibfield  {journal} {\bibinfo  {journal} {IEEE Journal of solid-state circuits}\ }\textbf {\bibinfo {volume} {31}},\ \bibinfo {pages} {1575} (\bibinfo {year} {1996})}\BibitemShut {NoStop}%
\bibitem [{\citenamefont {Gao}\ \emph {et~al.}(2012)\citenamefont {Gao}, \citenamefont {Wang}, \citenamefont {He}, \citenamefont {Zhang},\ and\ \citenamefont {Zhang}}]{gao2012innovative}%
  \BibitemOpen
  \bibfield  {author} {\bibinfo {author} {\bibfnamefont {X.-M.}\ \bibnamefont {Gao}}, \bibinfo {author} {\bibfnamefont {Y.}~\bibnamefont {Wang}}, \bibinfo {author} {\bibfnamefont {Y.-D.}\ \bibnamefont {He}}, \bibinfo {author} {\bibfnamefont {G.-G.}\ \bibnamefont {Zhang}},\ and\ \bibinfo {author} {\bibfnamefont {X.}~\bibnamefont {Zhang}},\ }\bibfield  {title} {\bibinfo {title} {An innovative sensing architecture for multilevel flash memory},\ }in\ \href@noop {} {\emph {\bibinfo {booktitle} {2012 IEEE 11th International Conference on Solid-State and Integrated Circuit Technology}}}\ (\bibinfo {organization} {IEEE},\ \bibinfo {year} {2012})\ pp.\ \bibinfo {pages} {1--3}\BibitemShut {NoStop}%
\bibitem [{\citenamefont {Cai}\ \emph {et~al.}(2017)\citenamefont {Cai}, \citenamefont {Ghose}, \citenamefont {Haratsch}, \citenamefont {Luo},\ and\ \citenamefont {Mutlu}}]{cai2017error}%
  \BibitemOpen
  \bibfield  {author} {\bibinfo {author} {\bibfnamefont {Y.}~\bibnamefont {Cai}}, \bibinfo {author} {\bibfnamefont {S.}~\bibnamefont {Ghose}}, \bibinfo {author} {\bibfnamefont {E.~F.}\ \bibnamefont {Haratsch}}, \bibinfo {author} {\bibfnamefont {Y.}~\bibnamefont {Luo}},\ and\ \bibinfo {author} {\bibfnamefont {O.}~\bibnamefont {Mutlu}},\ }\bibfield  {title} {\bibinfo {title} {Error characterization, mitigation, and recovery in flash-memory-based solid-state drives},\ }\href@noop {} {\bibfield  {journal} {\bibinfo  {journal} {Proceedings of the IEEE}\ }\textbf {\bibinfo {volume} {105}},\ \bibinfo {pages} {1666} (\bibinfo {year} {2017})}\BibitemShut {NoStop}%
\bibitem [{\citenamefont {Hrmo}\ \emph {et~al.}(2023)\citenamefont {Hrmo}, \citenamefont {Wilhelm}, \citenamefont {Gerster}, \citenamefont {van Mourik}, \citenamefont {Huber}, \citenamefont {Blatt}, \citenamefont {Schindler}, \citenamefont {Monz},\ and\ \citenamefont {Ringbauer}}]{hrmo2023native}%
  \BibitemOpen
  \bibfield  {author} {\bibinfo {author} {\bibfnamefont {P.}~\bibnamefont {Hrmo}}, \bibinfo {author} {\bibfnamefont {B.}~\bibnamefont {Wilhelm}}, \bibinfo {author} {\bibfnamefont {L.}~\bibnamefont {Gerster}}, \bibinfo {author} {\bibfnamefont {M.~W.}\ \bibnamefont {van Mourik}}, \bibinfo {author} {\bibfnamefont {M.}~\bibnamefont {Huber}}, \bibinfo {author} {\bibfnamefont {R.}~\bibnamefont {Blatt}}, \bibinfo {author} {\bibfnamefont {P.}~\bibnamefont {Schindler}}, \bibinfo {author} {\bibfnamefont {T.}~\bibnamefont {Monz}},\ and\ \bibinfo {author} {\bibfnamefont {M.}~\bibnamefont {Ringbauer}},\ }\bibfield  {title} {\bibinfo {title} {Native qudit entanglement in a trapped ion quantum processor},\ }\href@noop {} {\bibfield  {journal} {\bibinfo  {journal} {Nature Communications}\ }\textbf {\bibinfo {volume} {14}},\ \bibinfo {pages} {2242} (\bibinfo {year} {2023})}\BibitemShut {NoStop}%
\bibitem [{\citenamefont {Ringbauer}\ \emph {et~al.}(2022)\citenamefont {Ringbauer}, \citenamefont {Meth}, \citenamefont {Postler}, \citenamefont {Stricker}, \citenamefont {Blatt}, \citenamefont {Schindler},\ and\ \citenamefont {Monz}}]{ringbauer2022universal}%
  \BibitemOpen
  \bibfield  {author} {\bibinfo {author} {\bibfnamefont {M.}~\bibnamefont {Ringbauer}}, \bibinfo {author} {\bibfnamefont {M.}~\bibnamefont {Meth}}, \bibinfo {author} {\bibfnamefont {L.}~\bibnamefont {Postler}}, \bibinfo {author} {\bibfnamefont {R.}~\bibnamefont {Stricker}}, \bibinfo {author} {\bibfnamefont {R.}~\bibnamefont {Blatt}}, \bibinfo {author} {\bibfnamefont {P.}~\bibnamefont {Schindler}},\ and\ \bibinfo {author} {\bibfnamefont {T.}~\bibnamefont {Monz}},\ }\bibfield  {title} {\bibinfo {title} {A universal qudit quantum processor with trapped ions},\ }\href@noop {} {\bibfield  {journal} {\bibinfo  {journal} {Nature Physics}\ }\textbf {\bibinfo {volume} {18}},\ \bibinfo {pages} {1053} (\bibinfo {year} {2022})}\BibitemShut {NoStop}%
\bibitem [{\citenamefont {Low}\ \emph {et~al.}(2023)\citenamefont {Low}, \citenamefont {White},\ and\ \citenamefont {Senko}}]{low2023control}%
  \BibitemOpen
  \bibfield  {author} {\bibinfo {author} {\bibfnamefont {P.~J.}\ \bibnamefont {Low}}, \bibinfo {author} {\bibfnamefont {B.}~\bibnamefont {White}},\ and\ \bibinfo {author} {\bibfnamefont {C.}~\bibnamefont {Senko}},\ }\bibfield  {title} {\bibinfo {title} {Control and readout of a 13-level trapped ion qudit},\ }\href@noop {} {\bibfield  {journal} {\bibinfo  {journal} {arXiv preprint arXiv:2306.03340}\ } (\bibinfo {year} {2023})}\BibitemShut {NoStop}%
\bibitem [{\citenamefont {Brock}\ \emph {et~al.}(2024)\citenamefont {Brock}, \citenamefont {Singh}, \citenamefont {Eickbusch}, \citenamefont {Sivak}, \citenamefont {Ding}, \citenamefont {Frunzio}, \citenamefont {Girvin},\ and\ \citenamefont {Devoret}}]{brock2024quantum}%
  \BibitemOpen
  \bibfield  {author} {\bibinfo {author} {\bibfnamefont {B.~L.}\ \bibnamefont {Brock}}, \bibinfo {author} {\bibfnamefont {S.}~\bibnamefont {Singh}}, \bibinfo {author} {\bibfnamefont {A.}~\bibnamefont {Eickbusch}}, \bibinfo {author} {\bibfnamefont {V.~V.}\ \bibnamefont {Sivak}}, \bibinfo {author} {\bibfnamefont {A.~Z.}\ \bibnamefont {Ding}}, \bibinfo {author} {\bibfnamefont {L.}~\bibnamefont {Frunzio}}, \bibinfo {author} {\bibfnamefont {S.~M.}\ \bibnamefont {Girvin}},\ and\ \bibinfo {author} {\bibfnamefont {M.~H.}\ \bibnamefont {Devoret}},\ }\bibfield  {title} {\bibinfo {title} {Quantum error correction of qudits beyond break-even},\ }\href@noop {} {\bibfield  {journal} {\bibinfo  {journal} {arXiv preprint arXiv:2409.15065}\ } (\bibinfo {year} {2024})}\BibitemShut {NoStop}%
\bibitem [{\citenamefont {Steane}(1997)}]{steane1997active}%
  \BibitemOpen
  \bibfield  {author} {\bibinfo {author} {\bibfnamefont {A.~M.}\ \bibnamefont {Steane}},\ }\bibfield  {title} {\bibinfo {title} {Active stabilization, quantum computation, and quantum state synthesis},\ }\href@noop {} {\bibfield  {journal} {\bibinfo  {journal} {Physical Review Letters}\ }\textbf {\bibinfo {volume} {78}},\ \bibinfo {pages} {2252} (\bibinfo {year} {1997})}\BibitemShut {NoStop}%
\bibitem [{\citenamefont {Knill}(2005)}]{knill2005quantum}%
  \BibitemOpen
  \bibfield  {author} {\bibinfo {author} {\bibfnamefont {E.}~\bibnamefont {Knill}},\ }\bibfield  {title} {\bibinfo {title} {Quantum computing with realistically noisy devices},\ }\href@noop {} {\bibfield  {journal} {\bibinfo  {journal} {Nature}\ }\textbf {\bibinfo {volume} {434}},\ \bibinfo {pages} {39} (\bibinfo {year} {2005})}\BibitemShut {NoStop}%
\bibitem [{\citenamefont {Brun}\ \emph {et~al.}(2018)\citenamefont {Brun}, \citenamefont {Lai},\ and\ \citenamefont {Zheng}}]{brun2018efficient}%
  \BibitemOpen
  \bibfield  {author} {\bibinfo {author} {\bibfnamefont {T.}~\bibnamefont {Brun}}, \bibinfo {author} {\bibfnamefont {C.-Y.}\ \bibnamefont {Lai}},\ and\ \bibinfo {author} {\bibfnamefont {Y.-C.}\ \bibnamefont {Zheng}},\ }\bibfield  {title} {\bibinfo {title} {Efficient and fault-tolerant preparation of large block code ancilla states},\ }in\ \href@noop {} {\emph {\bibinfo {booktitle} {APS March Meeting Abstracts}}},\ Vol.\ \bibinfo {volume} {2018}\ (\bibinfo {year} {2018})\ pp.\ \bibinfo {pages} {X15--003}\BibitemShut {NoStop}%
\bibitem [{\citenamefont {Bravyi}\ and\ \citenamefont {Kitaev}(2005)}]{bravyi2005universal}%
  \BibitemOpen
  \bibfield  {author} {\bibinfo {author} {\bibfnamefont {S.}~\bibnamefont {Bravyi}}\ and\ \bibinfo {author} {\bibfnamefont {A.}~\bibnamefont {Kitaev}},\ }\bibfield  {title} {\bibinfo {title} {Universal quantum computation with ideal clifford gates and noisy ancillas},\ }\href@noop {} {\bibfield  {journal} {\bibinfo  {journal} {Physical Review A—Atomic, Molecular, and Optical Physics}\ }\textbf {\bibinfo {volume} {71}},\ \bibinfo {pages} {022316} (\bibinfo {year} {2005})}\BibitemShut {NoStop}%
\bibitem [{\citenamefont {Ahlswede}\ \emph {et~al.}(2000)\citenamefont {Ahlswede}, \citenamefont {Cai}, \citenamefont {Li},\ and\ \citenamefont {Yeung}}]{ahlswede2000network}%
  \BibitemOpen
  \bibfield  {author} {\bibinfo {author} {\bibfnamefont {R.}~\bibnamefont {Ahlswede}}, \bibinfo {author} {\bibfnamefont {N.}~\bibnamefont {Cai}}, \bibinfo {author} {\bibfnamefont {S.-Y.}\ \bibnamefont {Li}},\ and\ \bibinfo {author} {\bibfnamefont {R.}~\bibnamefont {Yeung}},\ }\bibfield  {title} {\bibinfo {title} {Network information flow},\ }\href {https://doi.org/10.1109/18.850663} {\bibfield  {journal} {\bibinfo  {journal} {IEEE Transactions on Information Theory}\ }\textbf {\bibinfo {volume} {46}},\ \bibinfo {pages} {1204} (\bibinfo {year} {2000})}\BibitemShut {NoStop}%
\bibitem [{\citenamefont {Li}\ \emph {et~al.}(2003)\citenamefont {Li}, \citenamefont {Yeung},\ and\ \citenamefont {Cai}}]{li2003linear}%
  \BibitemOpen
  \bibfield  {author} {\bibinfo {author} {\bibfnamefont {S.-Y.}\ \bibnamefont {Li}}, \bibinfo {author} {\bibfnamefont {R.}~\bibnamefont {Yeung}},\ and\ \bibinfo {author} {\bibfnamefont {N.}~\bibnamefont {Cai}},\ }\bibfield  {title} {\bibinfo {title} {Linear network coding},\ }\href {https://doi.org/10.1109/TIT.2002.807285} {\bibfield  {journal} {\bibinfo  {journal} {IEEE Transactions on Information Theory}\ }\textbf {\bibinfo {volume} {49}},\ \bibinfo {pages} {371} (\bibinfo {year} {2003})}\BibitemShut {NoStop}%
\bibitem [{\citenamefont {Koetter}\ and\ \citenamefont {Medard}(2003)}]{koetter2003algebraic}%
  \BibitemOpen
  \bibfield  {author} {\bibinfo {author} {\bibfnamefont {R.}~\bibnamefont {Koetter}}\ and\ \bibinfo {author} {\bibfnamefont {M.}~\bibnamefont {Medard}},\ }\bibfield  {title} {\bibinfo {title} {An algebraic approach to network coding},\ }\href {https://doi.org/10.1109/TNET.2003.818197} {\bibfield  {journal} {\bibinfo  {journal} {IEEE/ACM Transactions on Networking}\ }\textbf {\bibinfo {volume} {11}},\ \bibinfo {pages} {782} (\bibinfo {year} {2003})}\BibitemShut {NoStop}%
\bibitem [{\citenamefont {Silva}\ \emph {et~al.}(2008)\citenamefont {Silva}, \citenamefont {Kschischang},\ and\ \citenamefont {Koetter}}]{silva2008rank}%
  \BibitemOpen
  \bibfield  {author} {\bibinfo {author} {\bibfnamefont {D.}~\bibnamefont {Silva}}, \bibinfo {author} {\bibfnamefont {F.~R.}\ \bibnamefont {Kschischang}},\ and\ \bibinfo {author} {\bibfnamefont {R.}~\bibnamefont {Koetter}},\ }\bibfield  {title} {\bibinfo {title} {A rank-metric approach to error control in random network coding},\ }\href {https://doi.org/10.1109/TIT.2008.928291} {\bibfield  {journal} {\bibinfo  {journal} {IEEE Transactions on Information Theory}\ }\textbf {\bibinfo {volume} {54}},\ \bibinfo {pages} {3951} (\bibinfo {year} {2008})}\BibitemShut {NoStop}%
\bibitem [{\citenamefont {Delsarte}(1978)}]{delsarte1978}%
  \BibitemOpen
  \bibfield  {author} {\bibinfo {author} {\bibfnamefont {P.}~\bibnamefont {Delsarte}},\ }\bibfield  {title} {\bibinfo {title} {Bilinear forms over a finite field, with applications to coding theory},\ }\href {https://doi.org/10.1016/0097-3165(78)90015-8} {\bibfield  {journal} {\bibinfo  {journal} {J. Combin. Theory Ser. A}\ }\textbf {\bibinfo {volume} {25}},\ \bibinfo {pages} {226} (\bibinfo {year} {1978})}\BibitemShut {NoStop}%
\bibitem [{\citenamefont {Gabidulin}(1985)}]{gabidulin1985}%
  \BibitemOpen
  \bibfield  {author} {\bibinfo {author} {\bibfnamefont {E.~M.}\ \bibnamefont {Gabidulin}},\ }\bibfield  {title} {\bibinfo {title} {Theory of codes with maximum rank distance},\ }\href@noop {} {\bibfield  {journal} {\bibinfo  {journal} {Problemy Peredachi Informatsii}\ }\textbf {\bibinfo {volume} {21}},\ \bibinfo {pages} {3} (\bibinfo {year} {1985})}\BibitemShut {NoStop}%
\bibitem [{\citenamefont {MacWilliams}(1977)}]{macwilliams1977theory}%
  \BibitemOpen
  \bibfield  {author} {\bibinfo {author} {\bibfnamefont {F.}~\bibnamefont {MacWilliams}},\ }\bibfield  {title} {\bibinfo {title} {The theory of error-correcting codes},\ }\href@noop {} {\bibfield  {journal} {\bibinfo  {journal} {Elsevier Science Publishers BV google schola}\ }\textbf {\bibinfo {volume} {2}},\ \bibinfo {pages} {39} (\bibinfo {year} {1977})}\BibitemShut {NoStop}%
\bibitem [{\citenamefont {Delfosse}\ and\ \citenamefont {Tham}(2024)}]{delfosse2024low}%
  \BibitemOpen
  \bibfield  {author} {\bibinfo {author} {\bibfnamefont {N.}~\bibnamefont {Delfosse}}\ and\ \bibinfo {author} {\bibfnamefont {E.}~\bibnamefont {Tham}},\ }\bibfield  {title} {\bibinfo {title} {Low-cost noise reduction for clifford circuits},\ }\href@noop {} {\bibfield  {journal} {\bibinfo  {journal} {arXiv preprint arXiv:2407.06583}\ } (\bibinfo {year} {2024})}\BibitemShut {NoStop}%
\bibitem [{\citenamefont {Cerezo}\ \emph {et~al.}(2021)\citenamefont {Cerezo}, \citenamefont {Arrasmith}, \citenamefont {Babbush}, \citenamefont {Benjamin}, \citenamefont {Endo}, \citenamefont {Fujii}, \citenamefont {McClean}, \citenamefont {Mitarai}, \citenamefont {Yuan}, \citenamefont {Cincio} \emph {et~al.}}]{cerezo2021variational}%
  \BibitemOpen
  \bibfield  {author} {\bibinfo {author} {\bibfnamefont {M.}~\bibnamefont {Cerezo}}, \bibinfo {author} {\bibfnamefont {A.}~\bibnamefont {Arrasmith}}, \bibinfo {author} {\bibfnamefont {R.}~\bibnamefont {Babbush}}, \bibinfo {author} {\bibfnamefont {S.~C.}\ \bibnamefont {Benjamin}}, \bibinfo {author} {\bibfnamefont {S.}~\bibnamefont {Endo}}, \bibinfo {author} {\bibfnamefont {K.}~\bibnamefont {Fujii}}, \bibinfo {author} {\bibfnamefont {J.~R.}\ \bibnamefont {McClean}}, \bibinfo {author} {\bibfnamefont {K.}~\bibnamefont {Mitarai}}, \bibinfo {author} {\bibfnamefont {X.}~\bibnamefont {Yuan}}, \bibinfo {author} {\bibfnamefont {L.}~\bibnamefont {Cincio}}, \emph {et~al.},\ }\bibfield  {title} {\bibinfo {title} {Variational quantum algorithms},\ }\href@noop {} {\bibfield  {journal} {\bibinfo  {journal} {Nature Reviews Physics}\ }\textbf {\bibinfo {volume} {3}},\ \bibinfo {pages} {625} (\bibinfo {year} {2021})}\BibitemShut {NoStop}%
\bibitem [{\citenamefont {Peruzzo}\ \emph {et~al.}(2014)\citenamefont {Peruzzo}, \citenamefont {McClean}, \citenamefont {Shadbolt}, \citenamefont {Yung}, \citenamefont {Zhou}, \citenamefont {Love}, \citenamefont {Aspuru-Guzik},\ and\ \citenamefont {O’brien}}]{peruzzo2014variational}%
  \BibitemOpen
  \bibfield  {author} {\bibinfo {author} {\bibfnamefont {A.}~\bibnamefont {Peruzzo}}, \bibinfo {author} {\bibfnamefont {J.}~\bibnamefont {McClean}}, \bibinfo {author} {\bibfnamefont {P.}~\bibnamefont {Shadbolt}}, \bibinfo {author} {\bibfnamefont {M.-H.}\ \bibnamefont {Yung}}, \bibinfo {author} {\bibfnamefont {X.-Q.}\ \bibnamefont {Zhou}}, \bibinfo {author} {\bibfnamefont {P.~J.}\ \bibnamefont {Love}}, \bibinfo {author} {\bibfnamefont {A.}~\bibnamefont {Aspuru-Guzik}},\ and\ \bibinfo {author} {\bibfnamefont {J.~L.}\ \bibnamefont {O’brien}},\ }\bibfield  {title} {\bibinfo {title} {A variational eigenvalue solver on a photonic quantum processor},\ }\href@noop {} {\bibfield  {journal} {\bibinfo  {journal} {Nature communications}\ }\textbf {\bibinfo {volume} {5}},\ \bibinfo {pages} {4213} (\bibinfo {year} {2014})}\BibitemShut {NoStop}%
\bibitem [{\citenamefont {Anand}\ and\ \citenamefont {Brown}(2023)}]{anand2023hamiltonians}%
  \BibitemOpen
  \bibfield  {author} {\bibinfo {author} {\bibfnamefont {A.}~\bibnamefont {Anand}}\ and\ \bibinfo {author} {\bibfnamefont {K.~R.}\ \bibnamefont {Brown}},\ }\bibfield  {title} {\bibinfo {title} {Hamiltonians, groups, graphs and ans{\"a}tze},\ }\href@noop {} {\bibfield  {journal} {\bibinfo  {journal} {arXiv preprint arXiv:2312.17146}\ } (\bibinfo {year} {2023})}\BibitemShut {NoStop}%
\bibitem [{\citenamefont {Aragon}\ \emph {et~al.}(2019)\citenamefont {Aragon}, \citenamefont {Gaborit}, \citenamefont {Hauteville}, \citenamefont {Ruatta},\ and\ \citenamefont {Z{\'e}mor}}]{aragon2019low}%
  \BibitemOpen
  \bibfield  {author} {\bibinfo {author} {\bibfnamefont {N.}~\bibnamefont {Aragon}}, \bibinfo {author} {\bibfnamefont {P.}~\bibnamefont {Gaborit}}, \bibinfo {author} {\bibfnamefont {A.}~\bibnamefont {Hauteville}}, \bibinfo {author} {\bibfnamefont {O.}~\bibnamefont {Ruatta}},\ and\ \bibinfo {author} {\bibfnamefont {G.}~\bibnamefont {Z{\'e}mor}},\ }\bibfield  {title} {\bibinfo {title} {Low rank parity check codes: New decoding algorithms and applications to cryptography},\ }\href@noop {} {\bibfield  {journal} {\bibinfo  {journal} {IEEE Transactions on Information Theory}\ }\textbf {\bibinfo {volume} {65}},\ \bibinfo {pages} {7697} (\bibinfo {year} {2019})}\BibitemShut {NoStop}%
\end{thebibliography}%

\appendix

\section{Proof of \cref{theorem:gabidulin_codes_parameters}}
\label{appendix:proof_gab_code_parameters}

For the sake of completeness, we include here a proof of this standard result~\cite{delsarte1978,gabidulin1985}.

\begin{proof}
Let $\bx_f=[f(\alpha_1),\ldots,f(\alpha_n)]$ be a codeword.
Its rank is equal to the dimension over $\Ftwo$ of the $\Ftwo$-subspace $V$ of $\FQ$ generated
by $f(\alpha_1),\ldots,f(\alpha_n)$.
Since $\alpha_1,\ldots,\alpha_n$ is a basis
of $\FQ$ the dimension of $V$ is just the dimension of the image of $f$ in $\FQ$, which is therefore equal to $n-\dim\ker f$. But the number of elements in $\ker f$ is at most $2^{k-1}$ since the kernel of $f$ is the set of roots in $\FQ$ of a polynomial of degree at most $2^{k-1}$. Therefore $\dim\ker f\leq
k-1$ and the rank of a non-zero vector of the code is therefore at least $n-k+1$.
The minimum rank of a non-zero vector is exactly $n-k+1$ by the Singleton bound.
The same argument implies in particular that the map
$
f \mapsto [f(\alpha_1),\ldots,f(\alpha_n)],
$
where $f$ ranges over the polynomials \eqref{eq:qpoly}, has zero kernel:
therefore the dimension of the code equals that of the space of polynomials, namely $k$.
\end{proof}

\end{document}